\title{Fixed parameter complexity of distance constrained labeling and uniform channel assignment problems
\footnote{Paper supported by project Kontakt LH12095 and by GAUK project 1784214.}\footnote{Second, third and fourth author are supported by the project SVV--2015--260223. First, third and fifth author is supported by project CE-ITI P202/12/G061 of GA ČR.}}
\titlerunning{Distance constrained labeling} 
\author[1]{Jiří Fiala}
\author[1]{Tomáš Gavenčiak}
\author[1]{Dušan Knop}
\author[1]{Martin Koutecký}
\author[1]{Jan Kratochvíl}
\affil[1]{Department of Applied Mathematics, Charles University\\
  Malostranské nám. 25, Prague
  \texttt{\{fiala,gavento,knop,koutecky,honza\}@kam.mff.cuni.cz}
}
\authorrunning{J. Fiala, T. Gavenčiak, D. Knop, M. Koutecký and J. Kratochvíl} 
\subjclass{G.2.2 Graph Theory}
\keywords{distance labeling, channel assignment, bounded cliquewidth, bounded vertex cover, fixed parameter tractability}
\DeclareMathOperator{\dist}{{\rm dist}}
\DeclareMathOperator{\nd}{{\rm nd}}
\DeclareMathOperator{\nlc}{{\rm nlc}}
\DeclareMathOperator{\vc}{{\rm vc}}
\newcommand{\cT}{{\mathcal T}}
\newcommand{\cW}{{\mathcal W}}
\newcommand{\NP}{{\sf{NP}}\xspace}
\newcommand{\FPT}{{\sf{FPT}}\xspace}
\newcommand{\N}{{\mathbb{N}}}
\newcommand{\bw}{{\mathbf w}}
\def\romanitems{
\renewcommand{\theenumi}{\roman{enumi}}
\renewcommand{\labelenumi}{(\theenumi)}
}
\theoremstyle{plain}
\newtheorem{observation}[theorem]{Observation}
\newtheorem{proposition}[theorem]{Proposition}
\theoremstyle{definition}
\newtheorem{prob}[theorem]{Problem}
\begin{document}

\maketitle

\begin{abstract}
  We study computational complexity of the class of distance-constrained graph labeling problems
from the fixed parameter tractability point of view. The parameters studied are 
neighborhood diversity and clique width. 


We rephrase the distance constrained graph labeling problem as
a specific uniform variant of the {\sc Channel Assignment} problem and show that this problem is fixed parameter tractable when 
parameterized by the neighborhood diversity together with the largest weight.
Consequently, every {\sc $L(p_1, p_2,\dots, p_k)$-labeling} problem is \FPT{} when parameterized by
the neighborhood diversity, the maximum $p_i$ and $k$.

Our results yield also \FPT{} algorithms for all {\sc $L(p_1, p_2,\dots, p_k)$-labeling} 
problems when parameterized by the size of a minimum vertex cover, answering an open question of
Fiala et al.: \emph{Parameterized complexity of coloring problems: Treewidth versus
vertex cover}. The same consequence applies on {\sc Channel Assignment} when the maximum weight is additionally
included among the parameters.


Finally, we show that the uniform variant of the {\sc Channel Assignment} problem becomes \NP-complete when 
generalized to graphs of bounded clique width.
\end{abstract}

\section{Introduction}\label{s:intro}

The frequency assignment problem in wireless networks 
yields an abundance of various mathematical models and related problems.
We study a group of such discrete optimization problems in terms of parameterized
computational complexity, which is one of the central paradigms of
contemporary theoretical computer science. We study parameterization of the problems by
\emph{clique width} and particularly by \emph{neighborhood diversity} ($\nd$), 
a graph parameter lying between clique width and the size of a minimum vertex cover.

All these problems are \NP-hard even for constant clique width,
including the uniform variant, as we show in this paper. 
On the other hand, we prove that they are in \FPT{} w.r.t. $\nd$. 
Such fixed parameter tractability has been so far known only for the special case of $L(p,1)$
labeling when parameterized by vertex cover~\cite{l:FGK09}.

\subsection{Distance constrained labelings}

Given a $k$-tuple of positive integers $p_1,\dots,p_k$, called \emph{distance constraints}, an $L(p_1,\dots,p_k)$-labeling of a graph is an assignment $l$ of integer labels to the vertices of the graph satisfying the following condition: 
Whenever vertices $u$ and $v$ are at distance $i$, the assigned labels differ by at least $p_i$. 
Formally, $\dist(u,v)=i \Longrightarrow |l(u)-l(v)|\ge p_i$ for all $u,v: \dist(u,v)\le k$. 
Often only non-increasing sequences of distance constraints are considered.

Any $L(1)$-labeling is a graph coloring and vice-versa. Analogously, any coloring of the $k$-th distance power of a graph is an $L(1,\dots,1)$-labeling. 
The concept of $L(2,1)$-labeling is attributed to Roberts by Griggs and Yeh~\cite{l:GY92}. 
It is not difficult to show that whenever $l$ is an optimal $L(p_1,\dots,p_k)$-labeling within a range $[0,\lambda]$, 
then the so called \emph{span} $\lambda$ is a linear combination of $p_1,\dots,p_k$~\cite{l:GY92,l:Kral06}. 
In particular, a graph $G$ allows an $L(p_1,\dots,p_k)$-labeling of span $\lambda$ if and only it has an $L(cp_1,\dots,cp_k)$-labeling of span $c\lambda$ for any positive integer $c$.

For the computational complexity purposes, we define the following class of decision problems:

\begin{prob}[\sc $L(p_1,\dots,p_k)$-labeling]~
\begin{center}
\begin{tabular} {|ll|}
      \hline
      {\bf Parameters:\enspace} & {\parbox[t]{27em}{Positive integers $p_1,\dots,p_k$}}\\
      {\bf Input:\enspace} & {\parbox[t]{27em}{Graph $G$, positive integer $\lambda$}}\\
      {\bf Question:\enspace}&\parbox[t]{27em}{Is there an $L(p_1,\dots,p_k)$ labeling of $G$ using labels from the interval $[0,\lambda]$?}\\
      \hline
\end{tabular}
\end{center}
\end{prob}

The {\sc $L(2,1)$-labeling} problem has been shown to be \NP-complete by Griggs and Yeh~\cite{l:GY92} 
by a reduction from {\sc Hamiltonian cycle} (with $\lambda=|V_G|$). 
Fiala, Kratochvíl and Kloks~\cite{l:FKK01} showed that {\sc $L(2,1)$-labeling} 
remains \NP-complete also for all fixed $\lambda\ge 4$,
while for $\lambda\le 3$ it is solvable in linear time. 

Despite a conjecture that {\sc $L(2,1)$-labeling} remains \NP-complete on trees~\cite{l:GY92}, 
Chang and Kuo~\cite{l:ChK96} showed a dynamic programming algorithm for this problem, 
as well as for all {$L(p_1,p_2)$-labelings where $p_2$ divides $p_1$. 
All the remaining cases have been shown to be \NP-complete by Fiala, Golovach and Kratochvíl~\cite{l:FGK08}. 
For graphs of tree width 2, the same authors show that {\sc $L(2,1)$-labeling} is \NP complete already on series-parallel graphs~\cite{l:FGK05}. 
Note that these results imply \NP-hardness 
of {\sc $L(3,2)$-labeling} on graphs of clique width at most 3 and of {\sc $L(2,1)$-labeling} for clique width at most 9.

On the other hand, when $\lambda$ is fixed, then the existence of an $L(p_1,\dots,p_k)$-labeling of $G$ can be expressed in MSO$_1,$ hence it allows a linear time algorithm on any graph of bounded clique width~\cite{t:KR01}.

Fiala et al.~\cite{l:FGK09} showed that the problem of {\sc $L(p,1)$-labeling} is \FPT{} when
parameterized by $p$ together with the size of the vertex cover. 
They also ask for the complexity characterization of the related {\sc Channel Assignment} problem.
We extend their work to the broader class of graphs and, consequently, in our Theorem~\ref{thm:chaVC} we provide a solution for their open problem.

\subsection{Channel assignment}

Channel assignment is a concept closely related to distance constrained graph labeling. 
Here, every edge has a prescribed weight $w(e)$ and it is required
that the labels of adjacent vertices differ at least by the weight of the corresponding edge. 
The associated decision problem is defined as follows:

\begin{prob}[\sc Channel Assignment]~
\begin{center}
\begin{tabular} {|ll|}
      \hline
      {\bf Input:\enspace} & {\parbox[t]{27em}{Graph $G$, a positive integer $\lambda$, edge weights $w: E_G\to \N$}}\\
      {\bf Question:\enspace}&\parbox[t]{27em}{Is there a labeling $l$ of the vertices of $G$ by integers from $[0,\lambda]$
                                                such that $|l(u)-l(v)|\geq w(u,v)$ for all $uv\in E_G$?}\\
      \hline
\end{tabular}
\end{center}
\end{prob}

The maximal edge weight is an obvious necessary lower bound for the span of any labeling. 
Observe that for any bipartite graph, in particular also for all trees, it is also an upper bound 
--- a labeling that assigns $0$ to one class of the bipartition and $w_{\max}=\max\{w(e),e\in E_G\}$ to the other class
satisfies all edge constraints. 
McDiarmid and Reed \cite{c:DR03} showed that it is \NP-complete to decide whether a graph of tree width $3$ 
allows a channel assignment of given span $\lambda$.
This \NP-hardness hence applies on graphs of clique width at most $17$.
It is worth to note that for graphs of tree width $2$, i.e. for subgraphs of series-parallel graphs,
the complexity characterization of the {\sc Channel Assignment} is still open. Only few partial results are known~\cite{c:Skvarek10},
among others that the {\sc Channel Assignment} is polynomially solvable on graphs of bounded tree width if the span $\lambda$ is bounded by a constant.

Any instance $G$, $\lambda$ of the {\sc $L(p_1,\dots,p_k)$-labeling} problem can straightforwardly be reduced to an instance $G^k,\lambda,w$ of the {\sc Channel Assignment} problem.
Here, $G^k$ arises from $G$ by connecting all pairs of vertices that are in $G$ at distance at most $k$, and for the edges of $G^k$
we let $w(u,v)=p_i$ whenever $\dist_G(u,v)=i$.

The resulting instances of {\sc Channel Assignment} have by the construction some special properties. We explore and generalize these to obtain a uniform variant of the {\sc Channel Assignment} problem.

\subsection{Neighborhood diversity}

Lampis significantly reduced (from the tower function to double exponential) the hidden constants of the generic polynomial algorithms for MSO$_2$ model checking on graphs with bounded vertex cover~\cite{t:Lampis12}.
To extend this approach to a broader class of graphs he introduced a new graph parameter called the neighborhood diversity of a graph as follows:

\begin{definition}[Neighborhood diversity]
A partition $V_1,\dots,V_d$ is called a \emph{neighborhood diversity decomposition} if it satisfies
\begin{itemize}
\item each $V_i$ induces either an empty subgraph or a complete subgraph of $G$, and
\item for each distinct $V_i$ and $V_j$ there are either no edges between $V_i$ and $V_j$, or every vertex of $V_i$ is adjacent to all vertices of $V_j$.
\end{itemize}
We write $u \sim v$ to indicate that $u$ and $v$ belong to the same class of the decomposition.

The \emph{neighborhood diversity} of a graph $G$, denoted by $\nd(G)$, is the minimum number of classes of a neighborhood diversity decomposition.
\end{definition}

Observe that for the optimal neighborhood diversity decomposition it holds that $u\sim u'$ is equivalent with 
$N(u)\setminus v = N(v)\setminus u$. 
Therefore, the optimal neighborhood diversity decomposition can be computed in $O(n^3)$ time~\cite{t:Lampis12}.

Classes of graphs of bounded neighborhood diversity reside between classes of bounded vertex cover and graphs of bounded clique width. Several non-MSO$_1$ problems, e.g. {\sc Hamiltonian cycle} not be solved in polynomial time on graphs of bounded clique width~\cite{t:Wanke94}. On the other hand, Fomin et al. stated more precisely that the {\sc Hamiltonian cycle} problem is $W[1]$-hard, when parameterized by clique width~\cite{t:FGLS10}.
In sequel, Lampis showed that some of these problems, including {\sc Hamiltonian cycle}, are indeed fixed parameter tractable on graphs of bounded neighborhood diversity~\cite{t:Lampis12}.

Ganian and Obdržálek~\cite{t:GO13} further deepened Lampis' results and showed that also  problems expressible in MSO$_1$
with cardinality constraints (cardMSO$_1$) are fixed parameter tractable when parameterized by $\vc(G)$ and/or $\nd(G)$.

It is easy to see that for a graph $G$ it holds that $\nd(G)\le2^{\vc(G)} + \vc(G)$
where $\vc(G)$ is the size of minimal vertex cover of the graph $G$. This is also used in more detail in the proof of 
Theorem~\ref{thm:chaVC}.

Observe that a sufficiently large $n$-vertex graph of bounded neighborhood diversity can be described in significantly more effective way, namely by using only $O(\log n\nd(G)^2)$ space:

\begin{definition}[Type graph]
The \emph{type graph} $T(G)$ for a neighborhood diversity decomposition $V_1,\dots,V_d$
of a graph $G$ is a vertex weighted graph on vertices $\{t_1,\dots,t_{d}\}$, 
where each $t_i$ is assigned weight $s(t_i)=|V_i|$, i.e. the size of the corresponding class of the decomposition. 
Distinct vertices $t_i$ and $t_j$ are adjacent in $T(G)$ if and only if the edges between the two corresponding classes $V_i$ and $V_j$ form a complete bipartite graph. Moreover, $T(G)$ contains a loop incident with vertex $t_i$ if and only if the corresponding class $V_i$ induces a clique. 
\end{definition}

\begin{figure}[ht!]
  \begin{tikzpicture}
  \tikzstyle{vrchol}=[circle, draw, inner sep=1pt, minimum width=3pt]
  \tikzstyle{node}=[fill=yellow!50, yellow!50]
  \tikzstyle{headder}=[font=\normalfont]
  
  \begin{scope}[shift={(0,4)},node distance=0.7cm]
    \node[headder] at (1.5,4) {$G$ and its $L(2,1,1)$-labelling};

    \draw[node] (0,0) circle (1cm);
    \node[vrchol](v1) at (0, 0.35) {7};
    \node[vrchol, below of=v1](v2) {5};

    \draw[node] (3,0) circle (1cm);
    \node[vrchol](v3) at (3,0) {3};

    \draw[node] (6,0) circle (1cm);
    \node[vrchol](v4) at (6,0) {8};

    \draw[node] (9,0) circle (1cm);
    \node[vrchol](v5) at (9, 0.35) {1};
    \node[vrchol, below of=v5](v6) {0};

    \draw[node] (12,0) circle (1cm);
    \node[vrchol](v7) at (12, 0.7) {7};
    \node[vrchol, below of=v7](v8) {9};
    \node[vrchol, below of=v8](v9) {5};

    \draw[node] (4.5,3) circle (1cm);
    \node[vrchol](v10) at (3.8, 3) {2};
    \node[vrchol, right of=v10](v11) {4};
    \node[vrchol, right of=v11](v12) {6};

    \draw (v1) edge (v2);
    \draw (v1) edge (v3);
    \draw (v2) edge (v3);
    \draw (v3) edge (v4);
    \draw (v4) edge (v5);
    \draw (v4) edge (v6);
    \draw (v5) edge (v7);
    \draw (v5) edge (v8);
    \draw (v5) edge (v9);
    \draw (v6) edge (v7);
    \draw (v6) edge (v8);
    \draw (v6) edge (v9);

    \draw (v7) edge (v8);
    \draw (v8) edge (v9);
    \draw (v7) edge[bend left] (v9);
    
    \draw (v3) edge (v10);
    \draw (v3) edge (v11);
    \draw (v3) edge (v12);
    \draw (v4) edge (v10);
    \draw (v4) edge (v11);
    \draw (v4) edge (v12);
    \draw (v10) edge (v11);
    \draw (v11) edge (v12);
    \draw (v10) edge[bend left] (v12);
  \end{scope}

  \begin{scope}[shift={(0,-0.5)},node distance=1cm]
    \node[headder] at (0,2.5) {$T(G)$};

    \node[vrchol](n1) at (0,0) {2};
    \node[vrchol, right of=n1](n2) {1};
    \node[vrchol, right of=n2](n3) {1};
    \node[vrchol, right of=n3](n4) {2};
    \node[vrchol, right of=n4](n5) {3};
    \node[vrchol](n6) at (1.5,1.5) {3};

    \path (n1) edge [out=230, in=310, distance=1cm] node[above] {} (n1);
    \path (n2) edge [out=230, in=310, distance=1cm] node[above] {} (n2);
    \path (n3) edge [out=230, in=310, distance=1cm] node[above] {} (n3);
    \path (n5) edge [out=230, in=310, distance=1cm] node[above] {} (n5);
    \path (n6) edge [out=50, in=130, distance=1cm] node[below] {} (n6);

    \path (n1) edge (n2);
    \path (n2) edge (n3);
    \path (n3) edge (n4);
    \path (n4) edge (n5);
    \path (n2) edge (n6);
    \path (n3) edge (n6);
    
  \end{scope}

  \begin{scope}[node distance=1.5cm,shift={(5.5,-0.5)}]
    \node[headder] at (0.5,2.5) {$T(G^3)$ and $w$};

    \node[vrchol](u1) at (0,0) {2};
    \node[vrchol, right of=u1](u2) {1};
    \node[vrchol, right of=u2](u3) {1};
    \node[vrchol, right of=u3](u4) {2};
    \node[vrchol, right of=u4](u5) {3};
    \node[vrchol](u6) at (2.25,1.5) {3};

    \path (u1) edge [out=230, in=310, distance=1cm] node[above] {2} (u1);
    \path (u2) edge [out=230, in=310, distance=1cm] node[above] {2} (u2);
    \path (u3) edge [out=230, in=310, distance=1cm] node[above] {2} (u3);
    \path (u4) edge [out=230, in=310, distance=1cm] node[above] {1} (u4);
    \path (u5) edge [out=230, in=310, distance=1cm] node[above] {2} (u5);
    \path (u6) edge [out=50, in=130, distance=1cm] node[below] {2} (u6);

    \path (u1) edge node[above] {2} (u2);
    \path (u2) edge node[above] {2} (u3);
    \path (u3) edge node[above] {2} (u4);
    \path (u4) edge node[above] {2} (u5);

    \path (u1) edge node[left] {1} (u6);
    \path (u2) edge node[left] {2} (u6);
    \path (u3) edge node[left] {2} (u6);
    \path (u4) edge node[left] {1} (u6);
    \path (u5) edge node[left] {1} (u6);

    \path (u1) edge[out=340, in=220, distance=1.5cm] node[below right] {1} (u3);
    \path (u1) edge[out=320, in=220, distance=3cm] node[below] {1} (u4);

    \path (u5) edge[out=190, in=320, distance=1.5cm] node[below left] {1} (u3);
    \path (u5) edge[out=220, in=340, distance=3cm] node[below] {1} (u2);
  \end{scope}
\end{tikzpicture}
  \caption{An example of a graph with its neighborhood diversity decomposition. 
	Vertex labels indicate one of its optimal $L(2,1,1)$-labelings. 
  The corresponding type graph. 
	The weighted type graph corresponding to the resulting instance of the {\sc Channel Assignment} problem.}
	\label{fig:labelExample}
\end{figure}
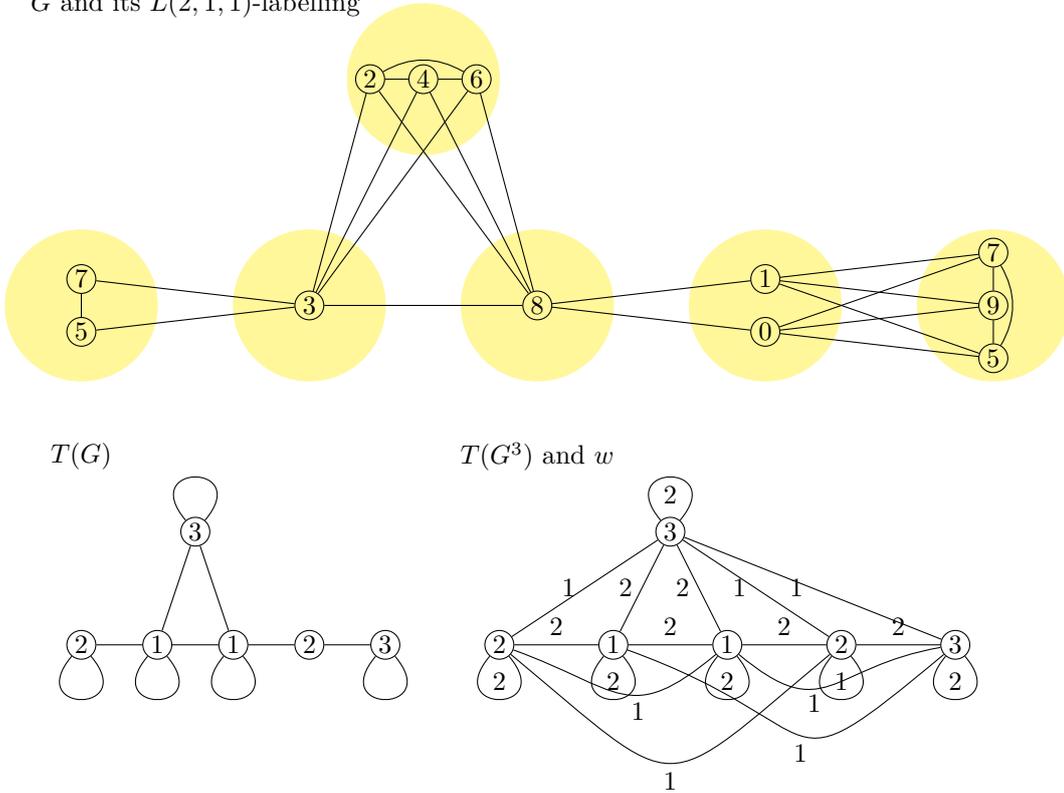

For our purposes, i.e. to decide existence of a suitable labeling of a graph $G$, 
it suffices to consider only its type graph, as $G$ can be uniquely reconstructed from $T(G)$ 
(upto an isomorphism) and vice-versa.

Moreover, the reduction of {\sc $L(p_1,\dots,p_k)$-labeling} to {\sc Channel Assignment} preserves the property of bounded neighborhood diversity:

\begin{observation}
For any graph $G$ and any positive integer $k$ it holds that $\nd(G)\ge\nd(G^k)$.
\end{observation}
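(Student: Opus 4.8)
The plan is to show that the optimal neighborhood diversity decomposition of $G$ is already a valid decomposition of $G^k$; since it has $\nd(G)$ classes, this immediately gives $\nd(G^k)\le\nd(G)$. Let $V_1,\dots,V_d$ be the optimal decomposition of $G$, so that $d=\nd(G)$ and, by the characterization of the optimal decomposition recalled above, $u\sim v$ holds exactly when $N_G(u)\setminus\{v\}=N_G(v)\setminus\{u\}$, i.e.\ when $u$ and $v$ are (false or true) twins in $G$.

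The technical heart is the claim that twins in $G$ are equidistant from every other vertex: if $N_G(u)\setminus\{v\}=N_G(v)\setminus\{u\}$ and $w\notin\{u,v\}$, then $\dist_G(u,w)=\dist_G(v,w)$. I would prove this by taking a shortest $u$--$w$ path $u=x_0,x_1,\dots,x_m=w$ and first arguing that $x_1\ne v$: if $x_1=v$ then $uv$ is an edge and $m\ge 2$, so $x_2$ exists with $x_2\notin\{u,v\}$, whence $x_2\in N_G(v)\setminus\{u\}=N_G(u)\setminus\{v\}$; this either shortcuts the path when $m\ge 3$ (since $u,x_2,\dots,x_m$ is shorter) or forces $w=x_2\in N_G(u)$ when $m=2$, in both cases contradicting minimality. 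Hence $x_1\in N_G(u)\setminus\{v\}=N_G(v)\setminus\{u\}$, so $v,x_1,\dots,x_m$ is a $v$--$w$ walk of length $m$ and $\dist_G(v,w)\le\dist_G(u,w)$; the reverse inequality follows by symmetry, and this argument also covers vertices lying in different components, where both distances are infinite.

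Granting the claim, the remainder is routine bookkeeping. Two vertices are adjacent in $G^k$ precisely when their distance in $G$ is at most $k$, so for $u\sim v$ in $G$ and any $w\notin\{u,v\}$ we have $w\in N_{G^k}(u)\iff\dist_G(u,w)\le k\iff\dist_G(v,w)\le k\iff w\in N_{G^k}(v)$; thus $N_{G^k}(u)\setminus\{v\}=N_{G^k}(v)\setminus\{u\}$, i.e.\ every class $V_i$ consists of pairwise twins in $G^k$ as well. A set whose every two elements are twins in $G^k$ induces a clique or an independent set, so each $V_i$ satisfies the first condition. For $i\ne j$ and a fixed $v\in V_j$, the twin property inside $V_i$ makes the adjacency in $G^k$ of a vertex of $V_i$ to $v$ independent of the choice of that vertex, and applying the same to $V_j$ shows that the edges between $V_i$ and $V_j$ in $G^k$ form either a complete bipartite graph or no edges. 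Hence $V_1,\dots,V_d$ is a neighborhood diversity decomposition of $G^k$, and $\nd(G^k)\le d=\nd(G)$.

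The only step demanding care is the distance claim --- specifically, ruling out that a shortest path from $u$ starts by stepping to $v$ --- but this is a short case check, so I expect no genuine obstacle.
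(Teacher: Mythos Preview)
Your proposal is correct and follows exactly the paper's approach: show that the optimal neighborhood diversity decomposition of $G$ is already a decomposition of $G^k$. The paper's proof is a single sentence asserting this fact, whereas you supply the full verification via the equidistance claim for twins; your argument is sound and simply fills in the details the paper leaves to the reader.
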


\begin{proof}
The optimal neighborhood diversity decomposition of $G$ is a neighborhood diversity decomposition of $G^k$. 
\end{proof}

\subsection{Our contribution}\label{ss:our}

Our goal is an extension of the FPT algorithm for {\sc $L(2,1)$-labeling} on graphs of bounded vertex cover
to broader graph class and for rich collections of distance constraints.
In particular, we aim at {\sc $L(p_1,\dots,p_k)$-labeling} on graphs of bounded neighborhood diversity.

For this purpose we utilize the aforementioned reduction to the {\sc Channel Assignment}, taking into account that the
neighborhood diversity remains bounded, even though the underlying graph changes.

It is worth to note that we must adopt additional assumptions for the {\sc Channel Assignment} since otherwise 
it is \NP-complete already on complete graphs, i.e. on graphs with $\nd(G)=1$.
To see this, we recall the construction of Griggs and Yeh~\cite{l:GY92}. They show that a graph 
$H$ on $n$ vertices has a Hamiltonian path if and only if the complement $H$ extended by a single universal vertex allows 
an $L(2,1)$-labeling of span $n+1$. As the existence of a universal vertex yields diameter two, the underlying graph for the resulting instance of {\sc Channel Assignment} is $K_{n+1}$.

On the other hand, the additional assumptions on the instances of {\sc Channel Assignment} still shall allow to reduce 
any instance of the {\sc $L(p_1,\dots,p_k)$-labeling} problem.
By the reduction, all edges between classes of the neighborhood diversity decomposition are assigned the same weight.
We formally adopt this as our additional constraint as follows:

\begin{definition}
The edge weights $w$ on a graph $G$ are \emph{$\nd$-uniform} if $w(u,v)=w(u', v')$ whenever $u\sim u'$ and $v\sim v'$
w.r.t. the optimal neighborhood diversity decomposition. 
In a similar way we define uniform weights w.r.t. a particular decomposition.
\end{definition}

Our main contribution is an algorithm for the following scenario:

\begin{theorem}\label{thm:cha}
The {\sc Channel Assignment} problem on $\nd$-uniform instances is \FPT{} when parameterized by $\nd$ and $w_{\max}$, where $w_{\max}=\max\{w(e),e\in E_G\}$.
\end{theorem}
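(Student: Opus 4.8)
The plan is to turn feasibility of an $\nd$-uniform {\sc Channel Assignment} instance into feasibility of an integer linear program whose number of variables depends only on $\nd(G)$ and $w_{\max}$, and then invoke the fact that ILP feasibility is fixed‑parameter tractable in the number of variables. As observed above, a graph of bounded neighborhood diversity is faithfully captured by its type graph, so I work with the optimal decomposition $V_1,\dots,V_d$ ($d=\nd(G)$), the sizes $n_i=|V_i|$, and $W=w_{\max}$. By $\nd$-uniformity all edges between $V_i$ and $V_j$ carry a common weight $w_{ij}\le W$, and if $V_i$ induces a clique all its edges carry a common weight $w_{ii}\le W$; call such a class \emph{clique}, the others \emph{independent}. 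I would first note that the admissible set of labels for a vertex of an independent class is the same for every vertex of that class (it depends only on the labels of the adjacent classes), so any valid labeling may be modified to make each independent class use a single common label. Thus without loss of generality each independent class uses exactly one label, and each clique class $V_i$ uses exactly $n_i$ labels that are pairwise at distance at least $w_{ii}$.

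\textbf{A finite automaton for the constraints.} For a labeling $l$ and a class $V_i$ let $\chi_i\colon\mathbb{Z}\to\{0,1\}$ be the indicator of the set of labels used by $V_i$ (supported inside $[0,\lambda]$). Every constraint is local of radius less than $W$: it only forbids $\chi_i(x)=\chi_j(x')=1$ for certain, possibly equal, indices $i,j$ whenever $|x-x'|<w_{ij}\le W$. Hence all constraints can be certified by sliding a window of $W$ consecutive positions along $\mathbb{Z}$. Call $\big(\chi_i(x),\dots,\chi_i(x+W-1)\big)_{i\in[d]}\in\big(\{0,1\}^{W}\big)^{d}$ the \emph{state} at position $x$, and \emph{admissible} if all constraints between positions inside the window hold; there are at most $2^{Wd}$ admissible states, a bound depending only on $W$ and $d$. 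Form the directed graph $H$ on the admissible states, with an edge from a state to every admissible state whose first $W-1$ columns equal the last $W-1$ columns of the former; the \emph{emission} of that edge is the revealed column $(\chi_i(x+W))_{i}\in\{0,1\}^d$. Padding $[0,\lambda]$ with $W$ all‑zero columns on each side, a labeling satisfying every constraint corresponds exactly to a closed walk in $H$ starting and ending in the all‑zero state $z$, of length $\lambda+W+1$, whose emissions sum coordinatewise to $n_i$ for each clique class $i$ and to $1$ for each independent class $i$.

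\textbf{The integer program.} I would introduce a variable $x_e\ge 0$ for every edge $e$ of $H$, impose flow conservation at each state, the length equation $\sum_e x_e=\lambda+W+1$, and for each class $i$ the emission equation $\sum_e(\text{$i$th coordinate of the emission of }e)\,x_e=n_i$ (respectively $=1$). A solution whose support is weakly connected and contains $z$ yields the desired closed walk by an Eulerian argument, and conversely. The one genuine obstacle is the gap between a \emph{feasible flow} and a \emph{single walk}: a feasible solution of the flow system may decompose into several closed walks lying in different parts of $H$, which need not combine into one labeling. I would repair this by enumerating all weakly connected subgraphs $H'$ of $H$ that contain $z$ --- at most $2^{|E(H)|}$ of them, still a function of $W$ and $d$ --- and, for each, adding the constraints $x_e\ge 1$ for $e\in E(H')$ and $x_e=0$ otherwise before testing feasibility. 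The instance is a \textsc{Yes}-instance iff one of these programs is feasible.

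\textbf{Conclusion and the hard part.} Each program has at most $|E(H)|\le 2^{(W+1)d}$ variables, a function of $W$ and $d$ alone, and coefficients bounded by $\lambda+W+1$ and $\max_i n_i$, hence of bit length polynomial in the input size; since ILP feasibility is fixed‑parameter tractable in the number of variables (Lenstra, Kannan), each program is decided in \FPT{} time, and the number of programs is bounded in the parameters, so the whole procedure is \FPT{} parameterized by $\nd(G)$ and $w_{\max}$. I expect the main obstacle to be precisely the connectivity issue just described --- arguing that the bounded enumeration over weakly connected subgraphs is both sound and complete --- together with verifying carefully that a window of width $w_{\max}$ really witnesses every clique‑ and between‑class constraint; the padding at the two ends of $[0,\lambda]$ is routine bookkeeping.
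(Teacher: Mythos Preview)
Your proposal is essentially the paper's own proof: reduce independent classes to singletons, build the shift--register digraph on admissible windows of width $w_{\max}$, encode a closed walk from the all--zero state as an ILP with flow conservation, length, and per--type count constraints, and solve it via Lenstra/Kannan (the paper cites Frank--Tardos). The one visible difference is how connectivity of the support is enforced---you branch over the at most $2^{|E(H)|}$ weakly connected edge--supports containing the zero state and solve one small ILP for each, whereas the paper keeps a single ILP and adds, for every edge and every cut separating it from $\emptyset^z$, an inequality $\alpha_{(W,U)}-\lambda\sum_{e\in C}\alpha_e\le 0$; both devices cost a double exponential in $\nd\cdot w_{\max}$ and give the same overall running time.
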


Immediately, we get the following consequence:

\begin{theorem}\label{thm:Lp}
The $p_1,\dots,p_k,$ the {\sc $L(p_1,\dots,p_k)$-labeling} problem is \FPT{} when parameterized by $\nd$, $k$ and maximum $p_i$ (or equivalently by $\nd$ and the $k$-tuple $(p_1,\dots,p_k)$).
\end{theorem}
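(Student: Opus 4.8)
The plan is to reduce the \textsc{$L(p_1,\dots,p_k)$-labeling} problem to the $\nd$-uniform \textsc{Channel Assignment} problem and then invoke Theorem~\ref{thm:cha}. Given an instance $G,\lambda$ of \textsc{$L(p_1,\dots,p_k)$-labeling}, I form the graph $G^k$ together with the weight function $w(u,v)=p_i$ whenever $\dist_G(u,v)=i\le k$, exactly as described in the excerpt; an $L(p_1,\dots,p_k)$-labeling of $G$ with span $\lambda$ is then literally the same object as a channel assignment of $G^k,\lambda,w$ with span $\lambda$, so the reduction is correct. It therefore remains to check that this reduction keeps the two parameters of Theorem~\ref{thm:cha} bounded in terms of the parameters $\nd(G),k,\max_i p_i$ of the $L$-labeling instance.

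For $w_{\max}$ this is immediate: every edge weight of $G^k$ is one of $p_1,\dots,p_k$, so $w_{\max}=\max_i p_i$. For the neighborhood diversity, the Observation in the excerpt gives $\nd(G^k)\le\nd(G)$, so the underlying graph of the \textsc{Channel Assignment} instance has neighborhood diversity bounded by the parameter $\nd(G)$. The one remaining point is that $w$ must be $\nd$-uniform with respect to the \emph{optimal} neighborhood diversity decomposition of $G^k$; here I would argue as in the excerpt's discussion preceding the definition of $\nd$-uniformity: if $u\sim u'$ and $v\sim v'$ in a neighborhood diversity decomposition of $G$ (equivalently, $N_G[u]=N_G[u']$ up to the usual twin convention, and similarly for $v,v'$), then $\dist_G(u,v)=\dist_G(u',v')$, hence $w(u,v)=w(u',v')$; and since the optimal decomposition of $G$ refines into (or coincides with) a decomposition of $G^k$, uniformity w.r.t.\ the optimal decomposition of $G^k$ follows. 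With all three quantities $\nd(G^k)$, $w_{\max}$ bounded by functions of $\nd(G),k,\max_i p_i$, Theorem~\ref{thm:cha} yields an \FPT{} algorithm, and running it on the constructed instance decides the original one; the construction of $G^k$ and $w$ itself takes polynomial time.

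The only delicate step is the uniformity bookkeeping: one must make sure the weights are uniform with respect to the \emph{optimal} decomposition of $G^k$ (which is what the definition of $\nd$-uniform refers to), not merely with respect to the decomposition inherited from $G$. Since the inherited decomposition may be coarser than optimal for $G^k$, and a weight function uniform w.r.t.\ a coarser decomposition is a fortiori uniform w.r.t.\ any refinement of it, this causes no real difficulty, but it is the place where a careless argument could slip. Everything else is routine. The parenthetical equivalence in the statement --- parameterizing by $\nd$ and the tuple $(p_1,\dots,p_k)$ versus by $\nd$, $k$ and $\max_i p_i$ --- is just the observation that $k$ and $\max_i p_i$ together bound the description size of the tuple and vice versa, so the two parameterizations are \FPT-equivalent.
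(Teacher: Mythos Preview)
Your overall approach---reduce to \textsc{Channel Assignment} on $G^k$ with $w(u,v)=p_{\dist_G(u,v)}$, bound $w_{\max}$ by $\max_i p_i$ and the neighborhood diversity by $\nd(G)$, then invoke Theorem~\ref{thm:cha}---is exactly what the paper does. The reduction and the two parameter bounds are fine.

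The slip is in the uniformity bookkeeping, and it is the opposite of what you diagnose. The decomposition of $G^k$ inherited from the optimal decomposition of $G$ is a \emph{refinement} of the optimal decomposition of $G^k$ (it has at least as many classes, since $\nd(G^k)\le\nd(G)$), not a coarsening. Uniformity with respect to a finer partition does \emph{not} imply uniformity with respect to a coarser one; indeed the implication goes the other way. Concretely, take $G=K_{1,3}$ with centre $c$ and leaves $\ell_1,\ell_2,\ell_3$: then $G^2=K_4$ has $\nd(G^2)=1$, yet $w(c,\ell_1)=p_1$ while $w(\ell_1,\ell_2)=p_2$, so for $p_1\neq p_2$ the weights are \emph{not} $\nd$-uniform with respect to the optimal decomposition of $G^2$. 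Your argument that ``$u\sim u'$ and $v\sim v'$ in the optimal decomposition of $G^k$ implies $\dist_G(u,v)=\dist_G(u',v')$'' is therefore false.

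The repair is the one the paper actually uses: the algorithmic content behind Theorem~\ref{thm:cha} is Proposition~\ref{prop:cha}, which is stated for weights that are uniform with respect to \emph{any} given neighborhood diversity partition with $\tau$ classes, not only the optimal one. Apply it with the inherited decomposition of $G^k$, which has $\tau\le\nd(G)$ classes and with respect to which the weights \emph{are} uniform (since twins in $G$ have equal distances to every third vertex). That gives the \FPT{} bound in terms of $\nd(G)$ and $w_{\max}=\max_i p_i$, as required.
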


Furthermore, our \FPT{} result for {\sc Channel Assignment} extends to vertex cover even without the uniformity requirement.

\begin{theorem}\label{thm:chaVC}
The {\sc Channel Assignment} problem is \FPT{} when parameterized by $w_{\max}$ and the size of vertex cover.
\end{theorem}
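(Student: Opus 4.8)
The plan is to reduce to the setting of Theorem~\ref{thm:cha}: from a small vertex cover I would produce a neighborhood diversity decomposition with respect to which the given weights \emph{are} uniform and whose number of classes is bounded by a function of $\vc(G)$ and $w_{\max}$, and then invoke the algorithm of Theorem~\ref{thm:cha} on it.

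First I would compute a minimum vertex cover $S$ with $|S| = \tau := \vc(G)$ in time $f(\tau)\cdot\mathrm{poly}(n)$, so that we may assume $S$ is given; the set $I := V_G\setminus S$ is independent and every edge of $G$ has an endpoint in $S$. Placing each vertex of $S$ in a singleton class and grouping the vertices of $I$ by their neighborhood inside $S$ already gives a neighborhood diversity decomposition with at most $2^{\tau}+\tau$ classes, which is the bound $\nd(G)\le 2^{\vc(G)}+\vc(G)$ mentioned above. The only obstruction to feeding this to Theorem~\ref{thm:cha} is that the weights need not be uniform with respect to it: two vertices of $I$ with the same neighborhood $A\subseteq S$ may send different weights to a common neighbor $a\in A$. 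This non-uniformity is the point the proof has to overcome.

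The key step is to refine the decomposition just enough to force uniformity without losing the bound on the number of classes: partition $I$ not only by the neighborhood $A\subseteq S$ but additionally by the weight vector $(w(v,a))_{a\in A}\in\{1,\dots,w_{\max}\}^{A}$. This creates at most $2^{\tau}\cdot w_{\max}^{\tau}$ classes inside $I$, so together with the $\tau$ singletons of $S$ we get a partition $\mathcal D$ with $d\le \tau + (2w_{\max})^{\tau}$ classes, a function of the parameters only. One checks routinely that $\mathcal D$ is a neighborhood diversity decomposition (each class is independent, there are no edges between two classes lying in $I$, and the adjacency between a singleton $\{s\}$ and a class of $I$ is complete-or-empty by construction) and that $w$ is uniform with respect to $\mathcal D$: the only parallel edges to compare are $sv$ and $sv'$ with $v,v'$ in the same class of $I$, and those classes were defined exactly so that $v$ and $v'$ carry identical weight vectors to their common neighborhood.

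It then remains to run the algorithm underlying Theorem~\ref{thm:cha} on $(G,\lambda,w)$ equipped with $\mathcal D$; its running time is $g(d,w_{\max})\cdot\mathrm{poly}(n)$, hence $g(\tau+(2w_{\max})^{\tau},w_{\max})\cdot\mathrm{poly}(n)$, which is \FPT{} in $\tau$ and $w_{\max}$. The technical point to be careful about here is that this algorithm is invoked on the (generally non-optimal) decomposition $\mathcal D$ rather than the optimal one, so one must make sure it relies only on being handed \emph{some} valid neighborhood diversity decomposition with uniform weights and not on its minimality. (As a safety net, one can also prove the theorem directly without Theorem~\ref{thm:cha}: if $\lambda \ge \tau(2w_{\max}-1)+1$ the answer is always YES, since $G[S]$ admits the labeling $v_i\mapsto (i-1)w_{\max}$ and every vertex of $I$ still has a free label; and if $\lambda < \tau(2w_{\max}-1)+1$ one brute-forces over all $(\lambda+1)^{\tau}$ labelings of $S$, checking for each whether every vertex of $I$ has an admissible label — both branches run in \FPT{} time.)
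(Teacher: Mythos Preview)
Your proposal is correct and essentially identical to the paper's proof: both refine the vertex-cover-based neighborhood diversity decomposition by weight vectors to force uniformity, obtaining at most $\vc(G)+2^{\vc(G)}w_{\max}^{\vc(G)}$ classes, and then invoke the $\nd$-uniform algorithm. The paper resolves the concern you flagged about non-optimality by applying Proposition~\ref{prop:cha} (which is stated for an \emph{arbitrary} neighborhood diversity partition) rather than Theorem~\ref{thm:cha} directly; your safety-net direct argument is a pleasant extra not present in the paper.
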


One may ask whether the uniform version of {\sc Channel Assignment} allows an \FPT{} algorithm also for a broader class of graph.
Finally, we show that a natural generalization of this concept on graphs of bounded clique width yields an \NP-complete 
problem on graphs of clique width at most 5.

\section{Representing labelings as sequences and walks}

We now focus on the $\nd$-uniform instances of the {\sc Channel Assignment} problem.
It has been already mentioned that the optimal neighborhood diversity decomposition can be computed in cubic time. 
The test, whether it is $\nd$-uniform, could be computed in extra quadratic time. 
On the other hand, on $\nd$-uniform instances it suffices to consider only the type graph, whose edges take weights from the edges of the underlying graph (see Fig.~\ref{fig:labelExample}), since such weighted type graph corresponds 
uniquely to the original weighted graph, upto an isomorphism.

Hence without loss of generalization 
assume that our algorithms are given the type graph whose edges are weighted by separation constraints $w$,
however we express the time complexity bounds in terms of the size of the original graph.

Without loss of generality we may assume that the given graph $G$ and its type graph $T(G)$ are connected, since connected components can be treated independently. 

If the type graph $T(G)$ contains a type $t$ not incident with a loop, we may reduce the channel assignment 
problem to the graph $G'$, obtained from $G$ by deleting all but one vertices of the type $t$. 
Any channel assignment of $G'$ yields a valid channel assignment of $G$ by using the same label on all vertices of type $t$ in $G$ as was given to the single vertex of type $t$ in $G'$. Observe that adding a loop to a 
type, which represents only a single vertex, does not affect the resulting graph $G'$. Hence we assume without loss of generality that all types are incident with a loop. We call such type graph \emph{reflexive}. 

\begin{observation}\label{obs:refl}
If the type graph $T(G)$ is reflexive, then vertices of $G$ of the same type have distinct labels in every channel assignment.
\end{observation}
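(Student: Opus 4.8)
The plan is to observe that a reflexive type $t$ corresponds to a class $V_t$ of the neighborhood diversity decomposition that induces a \emph{clique} in $G$, and that the $\nd$-uniform edge weights are positive integers, in particular at least $1$ on the loop at $t$. Concretely, I would argue as follows: let $t$ be any type of $T(G)$ incident with a loop, and let $u,v \in V_t$ be two distinct vertices of that type. Since the class $V_t$ induces a complete subgraph (this is exactly what the loop in the type graph records), $uv$ is an edge of $G$, and the weight $w(u,v)$ on this edge is a positive integer, hence $w(u,v) \ge 1$. Any channel assignment $l$ must therefore satisfy $|l(u) - l(v)| \ge w(u,v) \ge 1 > 0$, so $l(u) \ne l(v)$.

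The statement then follows immediately: for an arbitrary channel assignment $l$ of $G$ and any two distinct vertices $u \sim v$ in a reflexive type graph, the corresponding class is a clique, so $u$ and $v$ are adjacent and thus receive distinct labels by the edge constraint. I would phrase the proof in one or two sentences, the only ingredients being (i) the definition of the type graph, where a loop at $t$ means $V_t$ induces a clique, and (ii) the definition of {\sc Channel Assignment}, where edge weights are strictly positive integers so the separation constraint across any edge is at least $1$.

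There is essentially no obstacle here; the only point that needs a word of care is making explicit \emph{why} a loop forces the class to be a clique and hence the two vertices to be adjacent — this is by definition of $T(G)$ — and \emph{why} adjacency forces distinct labels, namely because $w$ takes values in $\N$ and hence $w(u,v) \ge 1$. I would not belabor the argument; a two-line proof invoking the definition of the type graph and the positivity of edge weights is all that is required, and it sets up the subsequent reduction (already begun in the surrounding text) in which one works with reflexive type graphs whose classes each contribute several distinct labels.
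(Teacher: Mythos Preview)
Your argument is correct and is precisely the intended one: a loop at $t$ in $T(G)$ means $V_t$ induces a clique, so distinct $u,v\in V_t$ are adjacent, and since $w(u,v)\in\N$ the constraint $|l(u)-l(v)|\ge w(u,v)\ge 1$ forces $l(u)\ne l(v)$. The paper in fact gives no proof of this observation at all, treating it as immediate from the definitions, so your two-line justification is exactly what is implicitly being invoked.
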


Up to an isomorphism of the graph $G$, any channel assignment $l$ is uniquely characterized by a sequence of type sets as follows:

\begin{lemma}\label{lem:seq}
Any weighted graph $G$ corresponding to a reflexive weighted type graph $T(G), w$ 
allows a channel assignment of span $\lambda$,
if and only if there exists a sequence of sets 
$\cT=T_0,\dots,T_{\lambda}$ of the following properties:
\begin{enumerate}\romanitems
 \item $T_i\subseteq V_{T(G)}$ for each $i\in [0,\lambda]$, 
 \item for each $t\in V_{T(G)}: s(t)=|\{T_i: t\in T_i\}|$, 
 \item for all $(t,r)\in E_{T(G)}: ( t\in T_i \land r\in T_j ) \Rightarrow |i-j|\ge w(t,r)$
\end{enumerate}
\end{lemma}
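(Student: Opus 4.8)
The plan is to prove the equivalence by exhibiting a natural bijection between channel assignments of $G$ (up to isomorphism) and sequences $\cT = T_0, \dots, T_\lambda$ of subsets of $V_{T(G)}$ satisfying (i)--(iii). The guiding idea is that in a channel assignment $l$, what matters is not which specific vertex of a type receives a given label, but only \emph{how many} and \emph{which types} use each label value in $[0,\lambda]$. So a labeling should be encoded by recording, for each label value $i$, the set $T_i$ of types that have some vertex labeled $i$.

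\medskip

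First I would prove the forward direction. Suppose $l$ is a channel assignment of $G$ of span $\lambda$. Define $T_i = \{ t \in V_{T(G)} : \exists v \in V_i \text{ with } l(v) = i\}$ for each $i \in [0,\lambda]$. Property (i) is immediate. For property (ii), fix a type $t$ with class $V_i$. By Observation~\ref{obs:refl} (applicable since $T(G)$ is reflexive), the vertices of $V_i$ receive pairwise distinct labels under $l$, so the map $v \mapsto l(v)$ is an injection from $V_i$ into $\{i : t \in T_i\}$; conversely every label value in that set is hit by exactly one vertex of $V_i$ by construction, giving $s(t) = |V_i| = |\{T_i : t \in T_i\}|$. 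For property (iii), take an edge $(t,r) \in E_{T(G)}$ with $t \in T_i$ and $r \in T_j$. If $t \neq r$, there are vertices $u$ in the class of $t$ and $v$ in the class of $r$ with $l(u) = i$, $l(v) = j$, and since $(t,r) \in E_{T(G)}$ the classes are completely joined, so $uv \in E_G$ and $|i-j| = |l(u)-l(v)| \ge w(u,v) = w(t,r)$ by $\nd$-uniformity. If $t = r$ (a loop), then the class of $t$ induces a clique, and if $i \neq j$ we again pick distinct vertices $u,v$ in that class with those labels, giving the same inequality; if $i = j$ then $|i-j| = 0$ but property (iii) only constrains pairs with $t \in T_i$, $r \in T_j$ and we must check $w(t,t) \le 0$ is not forced --- actually here we need $i \neq j$ whenever $t \in T_i \cap T_j$, which follows because by (ii) a type appears in the family $\{T_i\}$ exactly $s(t)$ times counted over distinct indices, and distinct vertices of a looped type get distinct labels. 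I would phrase (iii) carefully so that the loop case reads: the $s(t)$ label values used by the class of $t$ are pairwise at distance $\ge w(t,t)$.

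\medskip

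For the backward direction, given a sequence $\cT$ satisfying (i)--(iii), I would construct a channel assignment. For each type $t$, property (ii) says the set $S_t = \{ i \in [0,\lambda] : t \in T_i\}$ has exactly $s(t) = |V_i|$ elements; fix an arbitrary bijection $\phi_t : V_i \to S_t$ and set $l(v) = \phi_t(v)$ for $v \in V_i$. Labels lie in $[0,\lambda]$ by (i). To verify the edge constraints: an edge $uv \in E_G$ has $u$ in the class of some $t$ and $v$ in the class of some $r$ with $(t,r) \in E_{T(G)}$ (including possibly $t = r$ when the edge lies inside a looped clique class). Then $l(u) \in S_t$ and $l(v) \in S_r$, i.e. $t \in T_{l(u)}$ and $r \in T_{l(v)}$, so property (iii) gives $|l(u) - l(v)| \ge w(t,r) = w(u,v)$, using $\nd$-uniformity to identify the edge weight with the type-pair weight. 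This completes the equivalence.

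\medskip

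**The main obstacle** I anticipate is bookkeeping around the loop/clique case and the precise meaning of property (iii) when $t = r$: one must be sure that (ii) together with the construction forces the $s(t)$ occurrences of a looped type to sit at distinct, sufficiently separated label values, and that the statement of (iii) is interpreted to cover \emph{pairs of distinct occurrences} of the same looped type (otherwise the condition $|i-j| \ge w(t,t) \ge 1$ would be vacuous or contradictory at $i = j$). I would make this explicit, noting that $\{T_i : t \in T_i\}$ in (ii) is genuinely a set of $s(t)$ \emph{distinct} sets only if we intend the multiset/index count, so the cleanest reading is $s(t) = |\{ i \in [0,\lambda] : t \in T_i\}|$, and then (iii) applied with the loop $(t,t)$ to two such indices $i \neq j$ yields the clique separation automatically. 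Everything else is a routine translation between the vertex-level and type-level descriptions, justified by $\nd$-uniformity and Observation~\ref{obs:refl}.
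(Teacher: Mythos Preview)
Your proposal is correct and follows essentially the same approach as the paper: in both directions you use the natural correspondence $T_i=\{t:\exists u\in V_t,\ l(u)=i\}$, invoke Observation~\ref{obs:refl} for condition~(ii), and verify (iii) via $\nd$-uniformity of the weights. Your extra care with the loop case ($t=r$) and the index-versus-set reading of (ii) is warranted --- the paper glosses over this, but your interpretation $s(t)=|\{i\in[0,\lambda]:t\in T_i\}|$ is the intended one and makes condition~(iii) sensible for loops when restricted to distinct indices.
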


\begin{proof}
Given a channel assignment $l: V_G \to [0,\lambda]$, we define the desired sequence $\cT$, such that the $i$-th element is the set of types that contain a vertex labeled by $i$. Formally $T_i=\{t: \exists u\in V_t : l(u)=i\}$.
Now
\begin{enumerate}\romanitems
\item each element of the sequence is a set of types, possibly empty,
\item as all vertices of $V_i$ are labeled by distinct labels by Observation~\ref{obs:refl}, any type $t$ occurs in $s(t)$ many elements of the sequence
\item if $u$ of type $t$ is labeled by $i$, and it is adjacent to $v$ of type $r$ labeled by $j$, then
$|i-j|=|l(u)-l(v)|\ge w(u,v) = w(t,r)$, i.e. adjacent types $t$ and $r$ may appear in sets 
that are in the sequence at least $w(t,r)$ apart.
\end{enumerate}

In the opposite direction assume that the sequence $\cT$ exists. Then for each set $T_i$ and type $t_j\in T_i$ we choose a
distinct vertex $u\in V_j$ and label it by $i$, i.e. $l(u)=i$.

Now the condition (ii) guarantees that all vertices are labeled, while condition (iii) guarantees that all distance constraints are fulfilled.
\end{proof}

Observe that Lemma~\ref{lem:seq} poses no constrains on sets $T$ that are at distance at least $w_{\max}$. Hence, we build an auxiliary directed graph $D$ on all possible sequences of sets of length at most $z=w_{\max}-1$.

The edges of $D$ connect those sequences, that overlap on a fragment of length $z-1$, i.e. when they could be consecutive in $\cT$.
This construction is well known from the so called shift register graph.

\begin{definition}
For a general graph $F$ and weights $w: E_F\to [1,z]$ we define a directed graph $D$ such that 
\begin{itemize}
\item the vertices of $V_D$ are all $z$-tuples $(T_1,\dots,T_z)$ of subsets of $V_F$ such that for all $(t,r)\in E_F: ( t\in T_i \land r\in T_j ) \Rightarrow |i-j|\ge w(t,r)$
\item $((T_1,\dots,T_z),(T_1',\dots,T_z'))\in E_D \Leftrightarrow T_i'=T_{i+1}$ for all $i\in[1,z-1]$.
\end{itemize}
\end{definition}

As the first condition of the above definition mimics (iii) of Lemma~\ref{lem:seq} with $F=T(G)$, 
any sequence $\cT$ that justifies a solution for $(T(G),w,\lambda)$, 
can be transformed into a walk of length $\lambda-z+1$ in $D$. 

In the opposite direction, namely in order to construct a walk in $D$, that corresponds to a valid channel assignment, we need to guarantee also an analogue of the condition (ii) of Lemma~\ref{lem:seq}. 
In other words, each type should occur sufficiently many times in the resulting walk.
Indeed, the construction of $D$ is independent on the function $s$, 
which specifies how many vertices of each type are present in $G$.

In this concern we consider only special walks that allow us to count the occurrences of sets within $z$-tuples.
Observe that $V_D$ contains also the $z$-tuple $\emptyset^z=(\emptyset,\dots,\emptyset)$. In addition, any 
walk of length $\lambda-z+1$ can be converted into a closed walk from $\emptyset^z$ of length $\lambda+z+1$, 
since the corresponding sequence $\cT$ can be padded with additional $z$ empty sets at the front, and another $\emptyset^z$ at the end. From our reasoning, the following claim is immediate:

\begin{lemma}\label{lem:walk}
A closed walk $\cW=W_1,\dots,W_{\lambda+z+1}$ on $D$ where $W_1=W_{\lambda+z+1}=\emptyset^z$,
yields a solution of the {\sc Channel Assignment} problem 
on a $\nd$-uniform instance $G,w,\lambda$ with reflexive $T(G)$, 
if and only if for each $t\in V_{T(G)}$ holds that  $s(t)=|\{W_i: t\in (W_i)_1 \}|$.
\end{lemma}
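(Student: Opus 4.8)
\textbf{Proof plan for Lemma~\ref{lem:walk}.}
The plan is to translate between sequences $\cT$ satisfying the three conditions of Lemma~\ref{lem:seq} and closed walks of the prescribed form in $D$, and to check that the occurrence-counting condition on $\cT$ corresponds exactly to the stated condition on $\cW$. First I would fix the correspondence: given a sequence $\cT=T_0,\dots,T_\lambda$ satisfying (i)--(iii) of Lemma~\ref{lem:seq}, pad it on the left with $z$ copies of $\emptyset$ and on the right with one copy of $\emptyset^z$ worth of empty sets, obtaining $\hat T=\emptyset,\dots,\emptyset,T_0,\dots,T_\lambda,\emptyset,\dots,\emptyset$ of length $\lambda+1+2z$; then set $W_i=(\hat T_i,\dots,\hat T_{i+z-1})$ for $i\in[1,\lambda+z+1]$. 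One verifies that each $W_i\in V_D$ because property (iii) of Lemma~\ref{lem:seq} (with $F=T(G)$) is precisely the membership condition for $V_D$, and that consecutive $W_i$'s are joined by an edge of $D$ by the shift condition $T_i'=T_{i+1}$; the endpoints are $W_1=W_{\lambda+z+1}=\emptyset^z$ by construction. Conversely, a closed walk $\cW=W_1,\dots,W_{\lambda+z+1}$ with $W_1=W_{\lambda+z+1}=\emptyset^z$ reconstitutes a unique padded sequence $\hat T$ of length $\lambda+z+1+(z-1)$ by reading off first coordinates $(W_i)_1$ and appending the last $z-1$ coordinates of $W_{\lambda+z+1}$, i.e. $\hat T_i=(W_i)_1$ for $i\le\lambda+z+1$; since the walk starts and ends at $\emptyset^z$, the first $z$ and last $z$ entries are empty, and stripping them yields a genuine sequence $T_0,\dots,T_\lambda$ of subsets of $V_{T(G)}$; properties (i) and (iii) of Lemma~\ref{lem:seq} are inherited from the definition of $V_D$ and the fact that any window of $z$ consecutive entries forms a vertex of $D$.

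It then remains to match the counting conditions. In the forward direction, padding adds only empty sets, so for every $t\in V_{T(G)}$ the number of windows $W_i$ with $t\in(W_i)_1$ equals the number of indices $i\in[0,\lambda]$ with $t\in T_i$, which by (ii) of Lemma~\ref{lem:seq} equals $s(t)$; hence $\cW$ satisfies $s(t)=|\{W_i:t\in(W_i)_1\}|$. In the backward direction the same identity, read in reverse, shows that the reconstructed sequence $\cT$ satisfies (ii). Combining this with the (i), (iii) bookkeeping above, $\cT$ satisfies all of (i)--(iii) of Lemma~\ref{lem:seq} exactly when $\cW$ satisfies the stated occurrence condition, and Lemma~\ref{lem:seq} then yields the equivalence with solvability of the {\sc Channel Assignment} instance $G,w,\lambda$. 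I would also note that the two maps described are mutually inverse up to the padding, so no solution is lost or spuriously created.

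The only mildly delicate points, which I would state carefully rather than belabor, are the exact index arithmetic — that a walk of $\lambda+z+1$ vertices corresponds to a sequence $\cT$ of $\lambda+1$ sets once the $z$ leading and $z$ trailing empty entries are removed (note $W_1$ already contributes $z$ leading empties through its coordinates, so the off-by-one matches) — and the observation that $\emptyset^z\in V_D$ trivially, so the closed-walk endpoints are legitimate vertices. The main conceptual obstacle is simply making the ``padding'' precise enough that the counts are not perturbed: since every appended entry is the empty set, it contains no type $t$, so it affects neither condition (iii) (vacuously satisfied) nor the tally in condition (ii); once this is spelled out, the lemma follows directly from Lemma~\ref{lem:seq} and the definition of $D$.
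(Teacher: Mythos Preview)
Your approach is exactly the one the paper takes: the paper does not give a separate formal proof of this lemma but declares it ``immediate'' from the preceding paragraph, which sketches precisely the padding-and-windowing correspondence between sequences $\cT$ satisfying Lemma~\ref{lem:seq} and closed walks in $D$ starting and ending at $\emptyset^z$. Your write-up simply fleshes out that sketch, and your caution about the index arithmetic is well placed (the exact count of padded entries versus walk vertices is off by one in the paper's own statement as well, so do verify it carefully when you write the final version).
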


We found interesting that our representation of the solution resembles the \NP-hardness reduction found by Griggs and Yeh~\cite{l:GY92} (it was briefly outlined in Section~\ref{ss:our}) and later generalized by Bodlaender et al.~\cite{l:BKTL00}. 
The key difference is that in their reduction, a Hamilton path is represented by a sequence of vertices of the constructed graph. 
In contrast, we consider walks in the type graph, which is assumed to be of limited size.

\section{The algorithm}

In this section we prove the following statement, which directly implies our main result, Theorem~\ref{thm:cha}:

\begin{proposition}\label{prop:cha}
Let $G, w$ be a weighted graph, whose weights are uniform with respect to a 
neighborhood diversity partition with $\tau$ classes. 

Then the {\sc Channel Assignment} problem can be decided on $G,w$ and any $\lambda$ in time $2^{2^{O(\tau w_{\max})}} \log n$,
where $n$ is the number of vertices of $G$, provided that $G,w$ are described by the weighted type graph $T(G)$ on $\tau$ nodes.

A suitable labeling of $G$ can be found in additional $2^{2^{O(\tau w_{\max})}} n$ time.
\end{proposition}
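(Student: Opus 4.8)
The plan is to turn the problem, via Lemmas~\ref{lem:seq} and~\ref{lem:walk}, into an integer feasibility question on the fixed digraph $D$, and then to dispatch it with Lenstra's integer programming algorithm. After the preprocessing described above (reflexive type graph $F=T(G)$ on at most $\tau$ nodes, $z=w_{\max}-1$), Lemma~\ref{lem:walk} says that $G,w,\lambda$ is a ``yes'' instance if and only if $D$ has a closed walk $W_1,\dots,W_{\lambda+z+1}$ with $W_1=W_{\lambda+z+1}=\emptyset^z$ such that for every type $t$ the number of slots $i$ with $t\in(W_i)_1$ equals $s(t)$. Since every vertex of $D$ is a $z$-tuple of subsets of the $\tau$-element set $V_F$, we have $|V_D|\le 2^{\tau z}$, and since each vertex has out-degree at most $2^{\tau}$ we get $|E_D|\le 2^{O(\tau w_{\max})}$; both $D$ and, for each type $t$, the set $\{v\in V_D: t\in(v)_1\}$ are computable in $2^{O(\tau w_{\max})}$ time. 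We may also assume $\lambda<n\cdot w_{\max}$: otherwise assigning label $(i-1)w_{\max}$ to the $i$-th vertex already satisfies every edge constraint, so the answer is trivially ``yes''. Hence $\lambda+z$ and all class sizes $s(t)\le n$ have bit-length $O(\log n+\tau w_{\max})$.

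Next I would encode the desired closed walk by its edge-multiplicity vector. By the Eulerian decomposition, a closed walk through $\emptyset^z$ with $\lambda+z$ edges corresponds exactly to an assignment of a non-negative integer $f(e)$ to each $e\in E_D$ that is \emph{balanced} (in-degree equals out-degree at every vertex under the multiplicities $f$), satisfies $\sum_{e\in E_D} f(e)=\lambda+z$, and whose support $\{e:f(e)\ge 1\}$ is weakly connected and contains $\emptyset^z$; moreover the number of visits of such a walk to a vertex $v$ equals $\sum_{e\in\delta^+(v)} f(e)$, so the profile requirement becomes $\sum_{v:\,t\in(v)_1}\sum_{e\in\delta^+(v)} f(e)=s(t)$ for every type $t$. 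The balance equations, the length equation and the profile equations are all linear in the $|E_D|$ variables $f(e)$; the only non-linear requirement is connectivity of the support, and this is the step I expect to be the main obstacle.

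I would dispose of it by brute force over the shape of the support, which the target bound comfortably affords. For each of the $2^{|E_D|}=2^{2^{O(\tau w_{\max})}}$ edge sets $E'\subseteq E_D$ that are weakly connected and incident to $\emptyset^z$ (both testable in $2^{O(\tau w_{\max})}$ time), solve the integer linear program made of the balance, length and profile equations together with $f(e)\ge 1$ for $e\in E'$ and $f(e)=0$ for $e\notin E'$, and answer ``yes'' as soon as one of these programs is feasible. Correctness follows from Lemma~\ref{lem:walk} in both directions: a feasible $f$ gives a connected balanced multigraph on $E'$, hence an Eulerian closed walk based at $\emptyset^z$ with $\lambda+z$ edges and the prescribed profile, hence a channel assignment of span $\lambda$; conversely a channel assignment yields through Lemma~\ref{lem:walk} a closed walk whose (weakly connected, $\emptyset^z$-incident) edge support is one of the enumerated sets $E'$, for which the corresponding program is feasible. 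Each program has $2^{O(\tau w_{\max})}$ variables, $2^{O(\tau w_{\max})}$ constraints and coefficients of bit-length $O(\log n+\tau w_{\max})$, so Lenstra's algorithm (integer programming is \FPT{} in the number of variables) decides it in time $2^{2^{O(\tau w_{\max})}}\cdot\mathrm{poly}(\log n)$; summing over the programs and charging the polylogarithmic factor to the input encoding gives the stated $2^{2^{O(\tau w_{\max})}}\log n$ decision bound, reading the $\tau$-node weighted type graph itself costing only $O(\tau^2\log n)$. (The outer enumeration is only a convenience: a standard flow-based connectivity certificate with $0/1$ edge-activity variables folds everything into one mixed integer program of the same asymptotic size.)

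For the constructive part, Lenstra's algorithm can return a witness $f$; from it one builds the explicit Eulerian closed walk in $D$ in time polynomial in $|E_D|+\lambda+z=2^{O(\tau w_{\max})}+O(n\,w_{\max})$, recovers the sequence $\cT$ from the walk via the shift-register correspondence, and finally reads off the labelling $l$ exactly as in the second half of the proof of Lemma~\ref{lem:seq}, labelling the $O(n)$ vertices of $G$. This adds $2^{2^{O(\tau w_{\max})}}\cdot n$ to the running time, matching the statement; the factor $n$ rather than $\log n$ is forced because the output labelling has size $\Omega(n)$. The only genuinely delicate point of the argument is the treatment of the support-connectivity constraint — everything else is size bookkeeping so that the single- and double-exponential factors land where claimed.
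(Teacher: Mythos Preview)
Your proposal is correct and follows essentially the same route as the paper: reduce via Lemma~\ref{lem:walk} to a closed walk in $D$ based at $\emptyset^z$, encode the walk by its edge-multiplicity vector, impose Kirchhoff balance, total-length and per-type profile constraints, and dispatch the resulting integer feasibility problem with Lenstra/Frank--Tardos in a number of variables that is single-exponential in $\tau w_{\max}$. The only point of departure is how support-connectivity is enforced: the paper keeps a single ILP and adds one big-$M$ inequality for every edge $(W,U)$ and every cut $C$ separating it from $\emptyset^z$ (hence $2^{|E_D|}=2^{2^{O(\tau w_{\max})}}$ constraints), whereas you enumerate the $2^{|E_D|}$ weakly connected supports touching $\emptyset^z$ externally and solve a small ILP for each; both tactics produce the same double-exponential factor and you yourself note their interchangeability. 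Your explicit preprocessing step $\lambda<n\,w_{\max}$ is a clean way to justify the $O(\log n)$ coefficient bit-length that the paper uses more implicitly.
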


\begin{proof}
According to Lemma~\ref{lem:walk}, 
it suffices to find a closed walk $\cW$ (if it exists) corresponding to the desired labeling $l$. 
From the well know Euler's theorem follows that any directed closed walk $\cW$ yields a multiset of edges in $D$ 
that induces a connected subgraph and that satisfies Kirchhoff's law.
In addition, any such suitable multiset of edges can be converted into a 
closed walk, though the result need not to be unique.


For this purpose we introduce an integer variable $\alpha_{(W,U)}$ for every directed edge $(W,U)\in E_D$.
The value of the variable $\alpha_{(W,U)}$ is the number of occurrences of $(W,U)$ in the multiset of edges.

Kirchhoff's law is straightforwardly expressed as:

$$  
\forall W\in V_D:  \sum_{U:(W,U)\in E_D} \alpha_{(W,U)} - \sum_{U:(U,W)\in E_D} \alpha_{(U,W)} = 0
$$

In order to guarantee the connectivity, observe first that an edge $(W,U)$ and $\emptyset^z$ would be 
in distinct components of a subgraph of $D$,
if the subgraph is formed by removing edges that include a cut $C$ between $(W,U)$ and $\emptyset^z$.
Now, the chosen multiset of edges is disconnected from $\emptyset^z$, if for some $(W,U)$ and $C$ holds that $\alpha_{(W,U)}$
has a positive value, while all variables corresponding to elements of $C$ are zeros.
As all variable values are upperbounded by $\lambda$, we express that $C$ is not a cutset for the chosen multiset of 
edges by the following condition:

$$  
\alpha_{(W,U)}-\lambda \sum_{e\in C} \alpha_{e} \le 0
$$

To guarantee the overall connectivity, we apply the above condition for every edge $(W,U)\in E_D$, 
where $W,U\ne \emptyset^z$, and for each set of edges $C$ 
that separates $\{W,U\}$ from $\emptyset^z$. 

The necessary condition expressed in Lemma~\ref{lem:walk} can be stated in terms of variables $\alpha_{(W,U)}$ as

$$
\forall t\in V_{T(G)}:  \sum_{W: t\in (W)_1} \sum_{U:(W,U)\in E_D} \alpha_{(W,U)} = s(t)
$$

Finally, the size of the multiset is the length of the walk, i.e.

$$
\sum_{(W,U)\in E_D} \alpha_{(W,U)} =\lambda+z+1
$$

Observe that these conditions for all $(W,U)$ and all suitable $C$ indeed imply that the $\emptyset^z$ 
belongs to the subgraph induced by edges with positively evaluated variables $\alpha_{(W,U)}$.

Algorithm~\ref{alg:cha} summarizes our deductions.

\begin{algorithm}[h]
\KwIn{A reflexive type graph $T(G)$ whose edges are labeled by $w$ and span $\lambda$.}
\KwOut{A channel assignment $l: G\to [0,\lambda]$ respecting constraints $w$, if it exists.}
\Begin{
\nl Compute $z:=w_{\max}-1$\;
\nl Construct the directed graph $D$\;
\nl Solve the following ILP in variables $\alpha_{(W,U)}: (W,U)\in E_D$:\\
\Indp\Indp
\nl for each $(W,U)\in E_D$:\\
\Indp\Indp
$\alpha_{(W,U)}\ge 0$ \\
\Indm\Indm
\nl for each $W\in V_D$:\\
\Indp\Indp
$\displaystyle\sum_{U:(W,U)\in E_D} \alpha_{(W,U)} - \sum_{U:(U,W)\in E_D} \alpha_{(U,W)} = 0 $\\
\Indm\Indm
\nl for each $(W,U)\in E_D$ and each cutset $C$ between $(W,U)$ and $\emptyset^z$ in $D$:\\
\Indp\Indp
$\displaystyle\alpha_{(W,U)}-\lambda \sum_{e\in C} \alpha_{e} \le 0$ \\
\Indm\Indm
\nl for each $t\in V_{T(G)}$:\\
\Indp\Indp
$\displaystyle\sum_{W: t\in (W)_1} \sum_{U:(W,U)\in E_D} \alpha_{(W,U)} = s(t)$ \\
\Indm\Indm
\nl $\displaystyle\sum_{(W,U)\in E_D} \alpha_{(W,U)} =\lambda+z+1$\;
\Indm\Indm
\nl \eIf {the ILP has a solution}{
\nl find a walk $\cW$ that traverses each edge $(W,U)$ exactly $\alpha_{(W,U)}$ times\;
\nl convert the walk $\cW$ into a labeling $l$ and \Return {$l$}\;
}{\nl \Return {"No channel assignment $l$ of span $\lambda$ exists."}}
}
\caption{Solving the {\sc Channel Assignment} problem.}\label{alg:cha}
\end{algorithm}

To complete the proof, we argue about the time complexity as follows:
\begin{itemize}
\item Line 1 needs $O(|E_{T(G)}|)=O(\tau^2)$ time.
\item As $D$ has $2^{\tau z}$ nodes and $2^{\tau (z+1)}$ edges, line 2 needs $2^{O(\tau z)}$ time.
\item Similarly, conditions at lines 4 and 5
require $2^{O(\tau z)}$ time and space to be composed. Analogously, conditions at lines 7 and 8 involve coefficients that are proportional to the size of the original graph $G$ (namely $\lambda$ and $s(t)$), hence $2^{O(\tau z)}\log n$ time and space is needed here.
\item For line 6, we examine each subset of $E_D$, whether it is a suitable cutset $C$.
There are at most $2^{2^{\tau (z+1)}}$ choices for $C$, so the overall time and space complexity 
for the composition of conditions at line 6 is $2^{2^{O(\tau z)}}\log n$.
\item Frank and Tardos~\cite{m:FrankTardos87} (improving the former result due to Lenstra~\cite{m:Lenstra83}) showed that the time needed to solve the system of inequalities with $p$ integer variables is $O(p^{2.5p+o(p)}L)$, where $L$ is the number of bits needed to encode the input. As we have $2^{O(\tau z)}$ variables and the conditions are encoded in space $2^{2^{O(\tau z)}}\log n$, the time needed to resolve the system of inequalities is $2^{2^{O(\tau z)}}\log n$.
\item A solution of the ILP can be converted into the walk in time $2^{2^{O(\tau z)}}n$, and the same bound applies to the conversion of a walk to the labeling at lines 10 and 11.
\end{itemize}

Observe that if only the existence of the labeling should be decided, the lines 10 and 11 need not to be executed, only an affirmative answer needs to be returned instead.
\end{proof}

We are aware the the double exponential dependency on $\nd$ and $w_{\max}$ makes our algorithm interesting 
mostly from the theoretical perspective. Naturally, one may ask, whether the exponential tower height might be reduced
or whether some nontrivial lower bounds on the computational complexity could be established 
(under usual assumptions on classes in the complexity hierarchy). 

\section{Bounded vertex-cover}\label{s:metaalg}

We utilize the results of the previous sections to derive an \FPT\ algorithm proposed as Theorem~\ref{thm:chaVC}. 

\begin{proof}[Proof of Theorem~\ref{thm:chaVC}]
Given a graph $G$ and its optimal vertex cover $U$, we construct a partition of the vertices of $G$ as follows. 
Next let $I = V(G)\setminus U$ be the independent set of $G$. 
We form a partition of $I$ as follows: For every subset $X\subseteq U$ we define:
$$I_X := \{v\in I\colon \{v,x\}\in E(G) \text{ for } x\in X \text{ and } \{v,x\}\notin E(G) \text{ for } x\notin X\}.$$

Observe that for any $u,v\in I_X$ it holds that $N(u)=X=N(v)$, and hence also $u\sim v$. 
In particular, the optimal neighborhood diversity decomposition of $G$ consists of all nonempty sets $I_X$ together with a suitable partition of $U$. Consequently, $\nd(G)\le 2^{\vc(G)} + \vc(G)$.

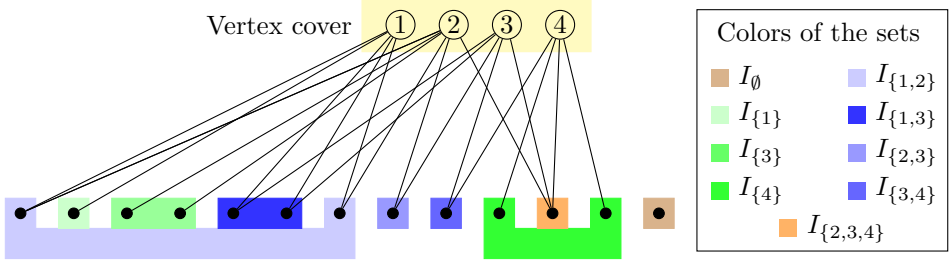
\begin{figure}[ht!]
  \begin{tikzpicture}
  [align=center, node distance=0.7cm]
  \tikzstyle{vrchol}=[circle, draw, fill=black, inner sep=0pt, minimum width=4pt]
  \tikzstyle{vc}=[circle, draw, inner sep=1pt, minimum width=4pt]
  \tikzstyle{legenda}=[rectangle, minimum width=5pt]
  
  \node[fill=yellow!40,thick,draw,yellow!30,
      minimum height=0.7cm,
      minimum width=3cm,
      label=west:Vertex cover 
  ] at (6,2.5) {};
  \begin{scope}
    \node[vc](100) at (5, 2.5) {1};
    \node[vc, right of=100](101) {2};
    \node[vc, right of=101](102) {3};
    \node[vc, right of=102](103) {4};
  \end{scope}

  \draw[fill=blue!20,blue!20] (-0.2,-0.6) -- (-0.2,0.2) -- (0.2,0.2) -- (0.2,-0.2) -- (4,-0.2) -- (4, 0.2) -- (4.4,0.2) -- (4.4,-0.6) -- cycle;
  \draw[fill=blue!80, blue!80] (2.6,-0.2) -- (2.6,0.2) -- (3.7,0.2) -- (3.7,-0.2) --  cycle;
  \draw[fill=green!20, green!20] (1-0.5,-0.2) -- (1-0.5,0.2) -- (.9,0.2) -- (.9,-0.2) --  cycle;
  \draw[fill=green!40, green!40] (1.2,-0.2) -- (1.2,0.2) -- (2.3,0.2) -- (2.3,-0.2) --  cycle;
  \draw[fill=green!80, green!80] (6.1,-0.6) -- (6.1,0.2) -- (6.5,0.2) -- (6.5,-0.2) -- (7.5,-0.2) -- (7.5,0.2) -- (7.9,0.2) -- (7.9,-0.6) -- cycle;
  \draw[fill=blue!40, blue!40] (4.7,-0.2) -- (4.7,0.2) -- (5.1,0.2) -- (5.1,-0.2) --  cycle;
  \draw[fill=blue!60, blue!60] (5.4,-0.2) -- (5.4,0.2) -- (5.8,0.2) -- (5.8,-0.2) --  cycle;
  \draw[fill=orange!60, orange!60] (6.8,-0.2) -- (6.8,0.2) -- (7.2,0.2) -- (7.2,-0.2) --  cycle;
  \draw[fill=brown!60, brown!60] (8.2,-0.2) -- (8.2,0.2) -- (8.6,0.2) -- (8.6,-0.2) --  cycle;
  
  \node[vrchol](1) at (0,0) {};
  \foreach \i [count=\q] in {2,...,13}
    \node[vrchol, right of=\q](\i){};

  \draw  (100) edge (1);
  \draw  (100) edge (2);
  \draw  (100) edge (5);
  \draw  (100) edge (6);
  \draw  (100) edge (7);

  \draw  (101) edge (1);
  \draw  (101) edge (3);
  \draw  (101) edge (4);
  \draw  (101) edge (7);
  \draw  (101) edge (8);
  \draw  (101) edge (11);
  \draw  (101) edge (1);
  
  \draw  (102) edge (5);
  \draw  (102) edge (6);
  \draw  (102) edge (8);
  \draw  (102) edge (9);
  \draw  (102) edge (11);
  
  \draw  (103) edge (9);
  \draw  (103) edge (10);
  \draw  (103) edge (11);
  \draw  (103) edge (12);

  \draw[draw=black] (8.9,-.5) rectangle (12.2,2.7);
  \node at (10.5,2.4) {Colors of the sets};
  \node[legenda,fill=brown!60,label=east:$I_{\emptyset}$] at (9.2,1.8) {};
  \node[legenda,fill=green!20,label=east:$I_{\{1\}}$] at (9.2,1.3) {};
  \node[legenda,fill=green!60,label=east:$I_{\{3\}}$] at (9.2,.8) {};
  \node[legenda,fill=green!80,label=east:$I_{\{4\}}$] at (9.2,.3) {};
  
  \node[legenda,fill=blue!20,label=east:$I_{\{1,2\}}$] at (11,1.8) {};
  \node[legenda,fill=blue!80,label=east:$I_{\{1,3\}}$] at (11,1.3) {};
  \node[legenda,fill=blue!40,label=east:$I_{\{2,3\}}$] at (11,.8) {};
  \node[legenda,fill=blue!60,label=east:$I_{\{3,4\}}$] at (11,.3) {};
  \node[legenda,fill=orange!60,label=east:$I_{\{2,3,4\}}$] at (10.1,-.2) {};
\end{tikzpicture}
  \caption{An example of a neighborhood diversity decomposition based on vertex cover---sets $I_X$ in the bottom.}
\end{figure}

We further refine sets $I_X$, so that the edge-weights became uniform. 
For a set $X = \{x_1,\dots,x_k\}\subseteq U$ and each $k$-tuple of positive integers $\bw = (w_1,\dots,w_k)$ with $0<w_i\le w_{\max}$ for every $1\le i\le k$ we define the set $I_X^\bw$ as
$$I_X^\bw := \{v\in I_X\colon w(v,x_i) = w_i\text{ for } 1\le i\le k\}.$$

Observe that any refinement of a neighborhood diversity decomposition is again a decomposition. We now estimate the number of types of the refined decomposition. The number of types of the refined decomposition can be upper-bounded by $\vc(G) + (2^{\vc(G)})w_{\max}^{\vc(G)}.$ To finish the proof we apply Proposition~\ref{prop:cha} on the refined decomposition.
\end{proof}

\section{NLC-uniform channel assignment}

One may ask whether the concept of $\nd$-uniform weights could be extended to broader graph classes.
We show, that already its direct extension to graphs of bounded clique width makes the 
{\sc Channel Assignment} problem \NP-complete. Instead of clique width we express our 
results in terms of NLC-width~\cite{t:Wanke94} (NLC stands for node label controlled). 
The parameter NLC-width is linearly dependent on clique width, but it is technically simpler.

We now briefly review the related terminology. 
A NLC-decomposition of a graph $G$ is a rooted tree whose leaves are in one-to-one correspondence 
with the vertices of $G$. For the purpose of inserting edges, each vertex is given a label 
(the labels for channel assignment are now irrelevant), which may change during the construction of the graph $G$.
Internal nodes of the tree are of two kinds: \emph{relabel} nodes and \emph{join} nodes.

Each relabel node has a single child and as a parameter takes a mapping $\rho$ on the set of labels.
The graph corresponding to a relabel node is isomorphic to the graph corresponding to its child,
only $\rho$ is applied on each vertex label.

Each join node has a two children and as a parameter takes a symmetric binary relation $S$ on the set of labels.
The graph corresponding to a relabel node is isomorphic to the disjoint union of the two graphs $G_1$ and $G_2$
corresponding to its children, where further edges are inserted as follows: 
$u\in V_{G_1}$ labeled by $i$ is made adjacent to $v\in V_{G_2}$ labeled by $j$ if and only if $(i,j)\in S$.

The minimum number of labels needed to construct at least one labeling of $G$ 
in this way is the NLC width of $G$, denoted by $\nlc(G)$.

Observe that $\nlc(G)\le \nd(G)$ as the vertex types could be used as labels for the corresponding vertices
and the adjacency relation in the type graph could be used for $S$ in all join nodes. 
In particular, in this construction the order of performing joins is irrelevant and no relabel nodes are needed.

\begin{definition}
The edge weights $w$ on a graph $G$ are \emph{$\nlc$-uniform} w.r.t. a particular 
NLC-decomposition, if $w(u,v)=w(u', v')$ 
whenever edges $(u,v)$ and $(u',v')$ are inserted during the same join operation
and at he moment of insertion $u,u'$ have the same label in $G_1$ and $v,v'$ have the same label in $G_2$.
\end{definition}

Observe that our comment before the last definition justifies 
that weights that are uniform w.r.t. a neighborhood diversity decomposition 
are uniform also w.r.t. the corresponding NLC-decomposition.

Gurski and Wanke showed that NLC-width remains bounded when taking powers~\cite{t:GW09}.
It is well known that NLC-width of a tree is at most three. Fiala et al. proved
that {\sc $L(3,2)$-labeling} is \NP-complete on trees~\cite{l:FGK08}. To combine these facts together 
we show that the weights on the graph arising from a reduction of the $L(3,2)$-labeling 
on a tree to {\sc Channel Assignment} are $\nlc$-uniform.

\begin{theorem}
The {\sc Channel Assignment} problem is \NP-complete on graphs with edge 
weights that are $\nlc$-uniform w.r.t. a NLC-decomposition of width at most four.
\end{theorem}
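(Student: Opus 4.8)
The plan is to prove both membership in \NP{} and \NP-hardness. For membership, note that a channel assignment is a certificate of polynomial size whenever the span $\lambda$ is polynomially bounded; but in general $\lambda$ could be large, so I would first argue (as is standard for {\sc Channel Assignment}) that on $\nlc$-uniform instances of bounded NLC-width it suffices to search for labelings of span at most a polynomial in the input size, or alternatively rely on the well-known fact that {\sc Channel Assignment} is in \NP{} because an optimal span is bounded by $n \cdot w_{\max}$. Either way membership in \NP{} is routine and I would dispatch it in one sentence.

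The heart of the proof is \NP-hardness. The idea is to reduce from {\sc $L(3,2)$-labeling} on trees, which Fiala, Golovach and Kratochv\'il~\cite{l:FGK08} proved to be \NP-complete. Given a tree $\Theta$, form the instance $\Theta^2, \lambda, w$ of {\sc Channel Assignment} via the standard reduction described earlier in the excerpt: connect all pairs of vertices at distance at most $2$ in $\Theta$, set $w(u,v)=3$ if $\dist_\Theta(u,v)=1$ and $w(u,v)=2$ if $\dist_\Theta(u,v)=2$. This is clearly a polynomial-time reduction and is correct by definition of the reduction. Two things remain to be checked: (i) that $\Theta^2$ has bounded NLC-width, and (ii) that the weight function $w$ is $\nlc$-uniform with respect to some NLC-decomposition of width at most four.

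For (i), I would invoke the fact that NLC-width of a tree is at most three together with the result of Gurski and Wanke~\cite{t:GW09} that taking a fixed power of a graph increases NLC-width only by a bounded amount; concretely I need a decomposition of $\Theta^2$ of width at most four, so I would either cite a sharp enough version of their bound or — more safely — construct an explicit width-four NLC-decomposition of $\Theta^2$ directly by rooting $\Theta$ and processing it bottom-up, maintaining for each subtree a constant number of labels that encode ``root'', ``child of root'', and ``already finished'' so that the distance-$\le 2$ edges can be inserted by join operations as the tree is assembled. This explicit construction is the technically fiddly part.

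For (ii), I would show that the explicit decomposition from (i) can be taken $\nlc$-uniform: every join operation in the construction inserts only edges of a single ``distance type'' at once — namely, when we attach a subtree to its parent we insert the parent-to-child edges (weight $3$) and the grandparent-to-child / sibling-to-child edges (weight $2$), and by choosing the labels so that weight-$3$ and weight-$2$ edges are never created by the same label pair in the same join (splitting a join into two consecutive joins if necessary, which costs at most a constant number of extra labels), uniformity holds. I expect the main obstacle to be exactly this simultaneous bookkeeping: keeping the label count at four while ensuring that each join is ``monochromatic'' in edge weight. If four labels turn out to be too tight for a fully self-contained construction, the fallback is to cite \cite{t:GW09} for the width bound on $\Theta^2$ and argue uniformity abstractly from the fact that in $\Theta^2$ each edge weight is determined by the original distance in $\Theta$, which the decomposition of the power already tracks. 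Combining (i), (ii) and the correctness of the reduction yields the theorem.
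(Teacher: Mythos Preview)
Your approach is essentially the paper's: reduce from {\sc $L(3,2)$-labeling} on trees and exhibit an explicit width-four NLC-decomposition of $\Theta^2$ with $\nlc$-uniform weights, built by rooting $\Theta$ and processing it recursively with labels meaning ``root'', ``child of root'', ``deeper'', plus one temporary label at the join. The paper carries this out exactly, with labels $1,2,3$ for those three roles and a transient label $4$, and a single join per edge of $\Theta$.

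One correction: you misread the definition of $\nlc$-uniformity. It does \emph{not} require each join to be monochromatic in weight; it only requires that within one join, the weight is determined by the pair of labels. So a single join may simultaneously insert weight-$3$ edges between labels $1$ and $2$ and weight-$2$ edges between labels $2$ and $2$ (and between $1$ and $4$), exactly as the paper does. Your plan to ``split a join into two consecutive joins'' is both unnecessary and ill-defined in the NLC framework (a join acts on two disjoint graphs; once joined you cannot join them again), and attempting it could push you above four labels. Drop that detour and the construction goes through cleanly with width four.

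Your fallback of citing Gurski--Wanke for the width bound and arguing uniformity ``abstractly'' would not work: their power bound is not sharp enough to give width four for $\Theta^2$, and nothing guarantees that an arbitrary NLC-decomposition of $\Theta^2$ makes the distance-induced weights uniform. The explicit construction is needed, and it is the one you already sketched.
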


\begin{proof}
Let a tree $T$ be an instance of the {\sc $L(3,2)$-labeling} problem.

By induction on the size of $T$ we show that $T^2$ allows an NLC-decomposition  
such that the weights $w$ prescribed by the reduction of {\sc $L(3,2)$-labeling}
to {\sc Channel Assignment} are $\nlc$-uniform.

Assume for the induction hypothesis that such NLC-decomposition 
exists for every tree $T'$ on less than $n$ vertices, 
where the labels of $T'$ are distributed as follows:
assume that $T'$ is rooted in a vertex $r'$, then $r'$ is labeled by $1$, its direct neighbors by $2$ and all other vertices by $3$.

Such decomposition clearly exists for a tree on a single vertex.

Now consider a tree $T$ on $n\ge 2$ nodes. Choose and edge $(r',r'')$ arbitrarily and define two trees $T'$ and $T''$ as the components of $T\setminus (r',r'')$, where $T'$ contains $r'$ and vice versa.

\begin{figure}[ht!]
  \includegraphics{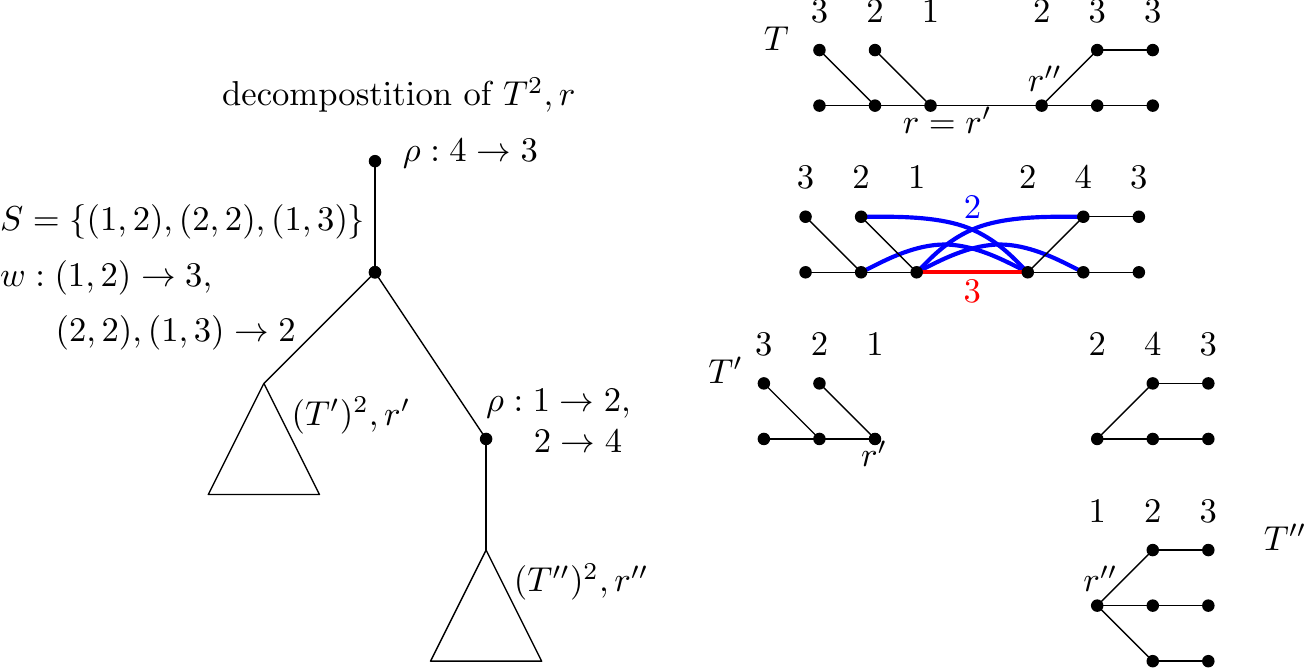}
  \caption{Recursive step in the construction of NLC-decomposition. The original edges of $T,T'$ and $T''$ are in black, 
	         only the added weighted edges of $T^2$ are in color.}
	\label{fig:nlc}
\end{figure}

By induction hypothesis $T'$ and $T''$ allow NLC-decompositions of the desired properties.
Before we join trees $T'$ and $T''$ together, we change labels in $T''$ as $1\to 2, 2\to 4$.
At the join will insert the following weighted edges: of weight 3 between vertices labeled 1 in $T'$ and 2 in $T''$, 
and of weight 2 between vertices of labels 2 and 2, and between 1 and 4, respectively.
Finally, we relabel $4\to 3$ and promote $r'$ to be the root $r$ of $T$. 
All steps are depicted in Fig.~\ref{fig:nlc}.  Observe that the result of this construction is $T^2$
with appropriate $\nlc$-uniform weights, and that its labeling satisfies all conditions of the induction hypothesis.
\end{proof}

\section{Conclusion}\label{s:con}

We have shown an algorithm for the {\sc Channel Assignment} problem on $\nd$-uniform instances 
and several complexity consequences for
{\sc $L(p_1,\dots p_k)$-labeling} problem. In particular, Theorem~\ref{thm:Lp} extends known results
for $L(p,1)$-labelings problem to labelings with arbitrarily many distance constraints, answering
an open question of~\cite{l:FGK09}. Simultaneously, we broaden the considered graph classes by restricting
neighborhood diversity instead of vertex cover.

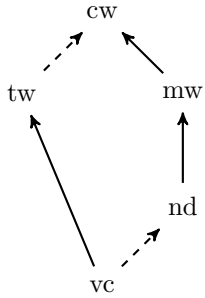
\begin{figure}[!ht]
  \begin{minipage}[c]{0.3\textwidth}
    \begin{tikzpicture}
  [align=center, node distance=1.5cm]
  \tikzstyle{parameter}=[inner sep=5pt]

  \node[parameter](cw) {cw};
  \node[parameter, below left of=cw](tw) {tw};
  \node[parameter, below right of=cw](mw) {mw};
  \node[parameter, below of=mw](nd) {nd};
  \node[parameter, below left of=nd](vc) {vc};

  \draw[->,line width=5mm,thick,dashed,>=stealth'] (tw) -- (cw);
  \draw[->,thick,dashed,>=stealth'] (vc) -- (nd);
  \draw[->,thick,>=stealth'] (mw) -- (cw);
  \draw[->,thick,>=stealth'] (nd) -- (mw);
  \draw[->,thick,>=stealth'] (vc) -- (tw);
  
\end{tikzpicture}
  \end{minipage}\hfill
  \begin{minipage}[c]{0.65\textwidth}
    \caption{A map of assumed parameters. Full arrow stands for linear upper bounds, while dashed arrow stands for exponential upper bounds.}
    \label{fig:parameterMap}
  \end{minipage}
\end{figure}

While the main technical tools of our alogorithms are bounded-dimension ILP programs, ubiquitous in the \FPT{} area,
the paper shows an interesting insight on the nature of the labelings over the type graph and the necessary
patterns of such labelings of very high span. Note that the span of a graph is generally not bounded by any of the
considered parameters and may be even proportional to the order of the graph.

Solving a generalized problem on graphs of bounded neighborhood diversity is a viable method
for designing \FPT{} algorithms for a given problem on graphs of bounded vertex cover, as demostrated by this and previous papers.
This promotes neighborhood diversity as a parameter that naturally generalizes the widely studied parameter vertex cover.

We would like to point out that the parameter {\em modular width}, proposed by Gajarský, Lampis and Ordyniak~\cite{t:GLO13}, offers further generalization of neighborhood diversity towards the clique width~\cite{t:CO00} (dependencies between these graph parameters are depicted in Fig.~\ref{fig:parameterMap}).

As an interesting open problem we ask whether it is possible to strengthen our results to graphs of bounded modular width
or whether the problem might be already \NP-complete for fixed modular width, as is the case with clique width.
For example, the {\sc Graph Coloring} problem ILP based algorithm for bounded neighborhood diversity translates
naturally to an algorithm for bounded modular width. On the other hand, there is no apparent way how our labeling
results could be adapted to modular width in a similar way.


\bibliography{main}

\end{document}